\documentclass[letterpaper, 10 pt, conference]{ieeeconf}  

\IEEEoverridecommandlockouts                              

\usepackage{graphics}
\usepackage{amsmath}

\usepackage{amsthm}
\usepackage{comment}
\usepackage{amssymb}
\usepackage{mathtools}
\usepackage{bbm}
\usepackage{dsfont}
\usepackage{color}
\usepackage{xcolor}
\definecolor{myblue}{rgb}{0 0.4470 0.7410}
\definecolor{mypurple}{rgb}{0.4940 0.1840 0.5560}
\definecolor{mygreen}{rgb}{0.4660 0.6740 0.1880}
\usepackage{cleveref}
\usepackage[noend]{algpseudocode}
\usepackage[ruled, vlined]{algorithm2e}
\newtheorem{lem}{Lemma}

\newtheorem{assumption}{Assumption}

\newtheorem{prop}{Proposition}

\newtheorem{definition}{Definition}
\newtheorem{remark}{Remark}
\theoremstyle{remark}

\usepackage{pgfplots}
\pgfplotsset{compat=newest}
\pgfplotsset{plot coordinates/math parser=false}
\newlength\figureheight
\newlength\figurewidth

\newcommand{\goutam}[1]{\textcolor{blue}{[Goutam: #1]}}

\newcommand{\R}{\mathbb{R}}

\title{\LARGE \bf
Bound-Optimized Task Choice \\ for Path Integral Control
}

\author{Rylie Anderson\textsuperscript{1}, Goutam Das\textsuperscript{2}, Takashi Tanaka\textsuperscript{3} \thanks{
This work is supported by DARPA COMPASS program grant HR0011-25-3-0210 and AFOSR DSCT program grant FA9550-25-1-0347. All authors are associated with the Networked Control Systems lab at Purdue University. Emails: \textsuperscript{1} ande1946@purdue.edu, \textsuperscript{2} das347@purdue.edu, and \textsuperscript{3} tanaka16@purdue.edu.
}}

\begin{document}

\maketitle
\thispagestyle{empty}
\pagestyle{empty}

\begin{abstract}

Path Integral (PI) control is a powerful sampling-based method for stochastic optimal control, but it requires a restrictive coupling between the noise covariance and the control cost matrix that is rarely satisfied in practice, particularly in aerospace and cyber-physical systems. We propose Bound-Optimized Task Choice (BOTC), a framework that optimizes over the entire space of valid approximations, termed tasks, satisfying the PI coupling constraint. We prove that every task provides an upper bound on the true cost-to-go and that BOTC minimizes this bound. We derive a change-of-measure formulation that enables evaluation of all candidate tasks from a single set of Monte Carlo samples, eliminating the need to resample for each candidate task. The resulting optimization is parameterized by a positive semi-definite matrix. Furthermore, we propose a novel Normal-Inverse-Wishart distribution-based importance sampling scheme to improve global optimization. We validate BOTC on a finite-horizon stochastic linear-quadratic regulator problem, demonstrating that it tracks the constrained optimum.
\end{abstract}

\section{Introduction}\label{sec: introduction}

\noindent Traditionally, non-linear, stochastic optimal control requires solving the non-linear Hamilton-Jacobi-Bellman (HJB) partial differential equation (PDE). However, the HJB is intractable for practical high-dimensional systems. PI control is an alternative stochastic optimal control framework based on Monte Carlo trajectory sampling \cite{kappen2005path}, \cite{kappen2007path}. The log-transformed HJB equation is linearized if a specific linear relationship between control cost and covariance holds:
\begin{equation}\label{pi_constraint}
    \Sigma_{(f)} = \lambda B_{(f)} R^{-1} B_{(f)}^T
\end{equation}
where $\Sigma_{(f)}$ is noise covariance, $B_{(f)}$ is the control effect matrix, and $R$ is quadratic control cost. The $(f)$ subscripts denote the control and covariance matrices acting on the full state space rather than the directly actuated subspace. Applying the Feynman-Kac Lemma \cite{oksendal2003stochastic}, the linearized HJB equation may be converted into an expectation over uncontrolled trajectory cost, which can be evaluated with Monte Carlo methods.

For applicable problems, PI control efficiently solves the HJB equation while avoiding its most common pitfalls: PI control is solved entirely forward in time and does not require explicit calculation of the cost gradient. 

PI control has since been further refined and successfully applied to numerous challenging control problems. \cite{todorov2007linearly}, \cite{todorov2009efficient} extended the framework to analogous discrete MDPs. \cite{williams2017model} proposed the highly practical MPPI algorithm, which introduced an iterative update law. \cite{williams2016aggressive} demonstrated the effectiveness of MPPI in the non-linear, noisy environment of aggressive driving. \cite{abraham_2020} further extended MPPI to encompass uncertainty in system dynamics and implemented PI control on a wide variety of robotics tasks. Recently, \cite{minarik2024modelpredictivepathintegral} demonstrated the feasibility of MPPI in a quadcopter navigation task.

Although the typical interpretation of \eqref{pi_constraint} is as an intuitive connection between state noise and control authority \cite{theodorou2011thesis}, the constraint has broader implications. In many interesting applications, including aerospace and cyber-physical systems, controls are designed primarily to navigate deterministic dynamics rather than simply reject perturbations. In such cases, as noted in \cite{williams2017model}, the constraint shifts the relative balance between actuator penalties in a manner with no physical meaning, impacting policy synthesis. \cite{satoh2017iterative} proposed an iterative solution to the same class of problems without assuming \eqref{pi_constraint}, but the proposed algorithm is both substantially more computationally intensive and requires sufficiently differentiable cost functions.

In practice, to meet the requirements of \eqref{pi_constraint}, control cost, covariance, or both must be modified. Two simple but informative approaches are to either: allow covariance to imply control cost; or allow control cost to imply covariance. In either case, the implied term must be scaled by $\lambda$ be to positive semi-definitely greater than its original. Both approaches distort the true control problem. The first approach overestimates control cost, favoring under-actuated strategies. The second disproportionately penalizes noise-sensitive trajectories, producing excessively risk-averse control strategies. Both options introduce a systematic bias into policy synthesis, yet both approaches actually solve a harder control problem, giving an upper bound on the true expected cost. Handling of this trade-off in the literature has varied. \cite{abraham_2020} employed the first approach, disregarding control cost entirely. \cite{williams2016aggressive} struck a balance between these extremes, but that balance requires manual tuning through trial-and-error.

Many practical control problems do not naturally satisfy the exact equality condition of \eqref{pi_constraint} and must be approximated to a harder problem that does. The two possible methods of approximation listed above are simply the two extremes of a continuous space of valid approximations. No existing work optimizes over the full space of valid problem approximations. In this work, we address this gap by introducing Bound-Optimized Task Choice (BOTC).

In this paper, we define and examine the space of all valid problem approximations and find the subspace of potentially optimal problem approximations. We present a method of estimating the expected cost as a function of a decision variable parameterizing this entire space without resampling. We explore a novel importance sampling scheme based on the Normal-Inverse-Wishart distribution. Finally, we validate BOTC against a simple LQR system.

The remainder of this paper is organized as follows. Section \ref{sec:path_integral_control} reviews PI control. Section \ref{sec:problem_approximation} defines the space of valid problem approximations. Section \ref{sec:task_choice} optimizes expected cost over the space of all valid approximations. Section \ref{sec:simulations} presents and analyzes the results of a simple LQR experiment.

\section{Path Integral Control}\label{sec:path_integral_control}
\noindent In this section, we review path integral control as a method of stochastic optimal control applicable to non-linear systems, affine in control, with quadratic control cost. Let $x \in \mathbb{R}^{n}$ be the state of the system. Let $u \in \mathbb{R}^{m}$ be the control signal. Let $B_{(f)} \in \mathbb{R}^{n \times m}$ be the control effect matrix, where $m$ is the number of control channels and $n$ is the number of state dimensions. Finally, let $d\xi \in \mathbb{R}^{n}$ be a Brownian disturbance. Therefore, the system may be modeled as:
\begin{equation}\label{system_def}
    dx = f(x, t)dt + B_{(f)} u dt + d \xi
\end{equation}
where $\mathbb{E}[d \xi_i d \xi_j ] = \Sigma_{(f), ij}$, so $\Sigma_{(f)}$ is the covariance of state noise. 

\begin{assumption}\label{assume:constant_sigmab}
    Both $\Sigma_{(f)}$ and $B_{(f)}$ are constant.
\end{assumption}

When constrained by \eqref{pi_constraint}, the range of $B_{(f)}$ and the positive support of the noise must share the same state dimensions. Then, $\Sigma_{(f)}$ and $B_{(f)}$ may be partitioned into zero and non-zero components:
\begin{equation}\label{bs_partition}
    B_{(f)} = \begin{bmatrix}
        0 \\
        B
    \end{bmatrix}, \qquad
    \Sigma_{(f)} = \begin{bmatrix}
        0 & 0 \\
        0 & \Sigma
    \end{bmatrix}
\end{equation}
where $B \in \mathbb{R}^{l \times m}$ and $\Sigma \in \mathbb{R}^{l \times l}$. Here, $l \leq n$ is the number of directly actuated state dimensions, so $\R^{l \times l}$ is the directly actuated subspace that noise may act in. Although in some other work \cite{theodorou2010generalized}, $\Sigma$ and $B$ may be functions of $x$ and $t$, Assumption \ref{assume:constant_sigmab} is necessary for the remainder of this paper.

Finally, define the cost function with quadratic control cost and arbitrary state and terminal costs. Let $\phi(x)$ be terminal cost, $q(x, t)$ be state cost matrix, and $R \in \mathbb{R}^{m \times m}$ be the quadratic control cost. Therefore, the expected cost becomes
\begin{multline}\label{cost_def}
    V(x, t) = \min_{u} \, \mathbb{E} \bigg[ \phi(x(t_f)) \\
    + \int_{t}^{t_f} \Big[q(x(\tau), \tau) 
    + \frac{1}{2} u(\tau)^T R \, u(\tau) \Big] d\tau \bigg]
\end{multline}
where the expectation is taken over trajectories of \eqref{system_def} under the admissible control policy $u$.

For a system of the form of \eqref{system_def} and cost function of form \eqref{cost_def}, the Stochastic Hamilton-Jacobi-Bellman equation is
\begin{multline}\label{original_hjb}
    -\partial_t V = q(x, t) + f(x, t)^T \nabla_x V + \frac{1}{2} \mathrm{Tr}(\Sigma_{(f)} \nabla_{xx}V) \\
    - \frac{1}{2} \nabla_x V^T B_{(f)} R^{-1} B_{(f)}^T \nabla_x V 
\end{multline}
where optimal control is $u^* = -R^{-1} B^T \nabla_x V$ \cite{kappen2005path}. However, solving this non-linear PDE is extremely difficult for complex systems. Instead, we introduce the logarithm transform $V=-\lambda \log \Psi$ and make the simplifying assumption that $\Sigma_{(f)} = \lambda B_{(f)} R^{-1} B_{(f)}^T$, or equivalently, $\Sigma = \lambda B R^{-1} B^T$. Under the transform and assumption, the non-linear terms of the stochastic HJB equation cancel, reducing to a linear PDE with condition $\Psi(x(t_f), t_f) = \exp (\lambda^{-1} \phi)$:
\begin{equation}\label{transformed_hjb}
    \partial_t \Psi = \frac{\Psi}{\lambda} q - f^T \nabla_x \Psi 
    - \frac{1}{2} \mathrm{Tr}(\Sigma_{(f)} \nabla_{xx} \Psi)
\end{equation}

This linear PDE belongs to a class that may be transformed into an expectation by the Feynman-Kac Lemma \cite{oksendal2003stochastic}. Under Feynman-Kac, \eqref{transformed_hjb} becomes an expectation of a function of total path cost under uncontrolled dynamics:
\begin{equation}\label{fk_transformed}
    \Psi(x, t) = \mathbb{E}^P \bigg[ \exp \bigg(-\frac{S}{\lambda} \bigg) \bigg]
\end{equation}
where $S = \phi(x(t_f)) + \int_{t}^{t_f}q(x(\tau), \tau)\,d\tau$ is the total uncontrolled path cost and $P$ is the probability measure for uncontrolled dynamics ($u =0$). This expectation may be effectively calculated with Monte Carlo methods by sampling and evaluating uncontrolled trajectories.

\section{Problem Approximation}\label{sec:problem_approximation}
\noindent This section introduces the concept of \textit{problem approximation}: the process of choosing and solving an alternative, tractable problem that provides a bound on the original problem's cost function as well as a valid control sequence. For path integral control, this practice is already widespread, albeit implicit. When applying \eqref{fk_transformed} in practice, the resulting $\Psi$ is calculated for the control cost matrix implied by $\Sigma$ through the linear constraint, rather than the original $R$. As we will demonstrate, the expected cost will be at least the original expected cost because new problem defined by the implied control cost is \textit{harder}.

\begin{definition}[Hardness]\label{def:harder}
    One problem is \textit{harder} than another if its optimal expected cost is greater-than-or-equal to the other's for all possible $x$ and $t$:
    \begin{equation}\label{harder_eq}
        V(x, t | \theta_{easy}) \leq V(x, t | \theta_{hard}) \qquad \forall x \in \R^n, \quad t \in [t_0, t_f]
    \end{equation}
    where $\theta_{easy}$ and $\theta_{hard}$ are the parameters defining the easy and hard problems, respectively.
\end{definition}

Although simply allowing $R$ to be implied by $\Sigma$ is the simplest method, it is neither the only nor generally optimal approximation. The constraint 
$\Sigma = \lambda B R^{-1} B^T$ admits infinitely many solutions if we permit positive semi-definite increases to both $\Sigma$ and $R$. We refer to each such modified problem as a \textit{task}. For any $\Sigma^{+}, R^{+} \geq \mathbf{0}$, a new task is defined by the coupling applied to the augmented parameters.
\begin{equation}\label{loose_constraint}
    \mathbf{0} \leq \Sigma + \Sigma^+ = \lambda B (R + R^+)^{-1}B^T
\end{equation}
Furthermore, we will demonstrate that under the following assumption, this task is harder than the original problem.

\begin{assumption}\label{assume:b_columns}
    The columns of $B$ span $\R^{l}$.
\end{assumption}

\begin{lem}\label{lem: posdef_r_s}
Under Assumption \ref{assume:b_columns}, the cost-to-go of a task is greater than or equal to the original's if its $R$ and $\Sigma$ matrices are positive semi-definitely greater.
 \begin{equation}
 \begin{split}
      V(x, t | \Sigma, R, \lambda ) \leq V(x, t | \Sigma + \Sigma^{+}, R + R^{+}, \lambda)
     \\ \forall \Sigma^{+} \geq \mathbf{0}, R^{+} \geq \mathbf{0}
 \end{split}
 \end{equation}
\end{lem}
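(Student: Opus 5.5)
We treat the two perturbations separately and chain the inequalities:
\begin{equation*}
V(x,t\mid\Sigma,R)\le V(x,t\mid\Sigma,R+R^+)\le V(x,t\mid\Sigma+\Sigma^+,R+R^+).
\end{equation*}
Throughout, a task differs from the original only in the quadratic control weight and the noise covariance. The drift $f$, the matrices $B_{(f)}$, the state cost $q$ and the terminal cost $\phi$ are unchanged, and $\lambda$ only enters through the coupling.

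\emph{Control cost monotonicity.} This step is direct from \eqref{cost_def}. For any admissible policy $u$, the trajectory distribution under \eqref{system_def} does not depend on $R$. Pointwise along every path we have $\tfrac12 u^T(R+R^+)u\ge \tfrac12 u^T R u$, because $R^+\ge\mathbf 0$. Hence the expected cost of every fixed policy is no smaller under $R+R^+$. Taking the infimum over $u$ on both sides gives the first inequality.

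\emph{Noise monotonicity.} This is the main obstacle. Adding noise does not in general increase cost, since for nonconvex $V$ it can help. We must therefore use the structure that Assumption \ref{assume:b_columns} provides. Because $B$ has columns spanning $\R^l$, the extra noise lives in the actuated subspace: $\Sigma^+_{(f)}$ has the block form \eqref{bs_partition}. We can then write the added disturbance as $B\,dw$ for some Brownian $dw$ in $\R^m$ with covariance $W$ satisfying $BWB^T=\Sigma^+$. One choice is $W=B^T(BB^T)^{-1}\Sigma^+(BB^T)^{-1}B$, using that $BB^T$ is invertible.

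We then argue by simulation. Realize the noisier system on a probability space carrying both $d\xi$ and an independent $dw$. Any admissible policy $\tilde u$ for the noisier task is then a randomized feedback for the original system, $u=\tilde u+\dot w$ in the formal sense. This reproduces the same state law, but its control cost involves $\dot w$, which is infinite. So a direct pathwise argument fails.

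Instead, we would work on the HJB side. Let $\tilde V$ be the value function of the noisier task, solving \eqref{original_hjb} with $\Sigma_{(f)}+\Sigma^+_{(f)}$. Then $\tilde V$ satisfies the original HJB up to the extra term $\tfrac12\mathrm{Tr}(\Sigma^+_{(f)}\nabla_{xx}\tilde V)$. If this term is nonnegative, $\tilde V$ is a supersolution of the original HJB. Since the terminal data coincide, a comparison principle for the original HJB (verification lemma or viscosity comparison) gives $V\le\tilde V$.

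The nonnegativity amounts to $\mathrm{Tr}(\Sigma^+ \nabla_{ll}\tilde V)\ge 0$, i.e. convexity of $\tilde V$ along the actuated directions. This is exactly where I expect difficulty. I would first try to establish it under the paper's setting via the path-integral representation: $\tilde V=-\lambda\log\tilde\Psi$ with $\tilde\Psi$ given by \eqref{fk_transformed}. A Hessian computation then yields
\begin{equation*}
\nabla\nabla\tilde V=\lambda^{-1}\big(\text{covariance of the tilted path gradient}\big)-\text{(expected Hessian of } S\text{)}.
\end{equation*}
This decomposition gives the required sign when $S$ is jointly convex in the actuated coordinates, for instance in the LQR setting used later. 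If this cannot be shown in general, I would state the second inequality under that convexity assumption, or compare the noisier task to the original one via its induced controls.
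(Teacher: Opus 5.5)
Your $R^+$ step is the same as the paper's. For the $\Sigma^+$ step, the paper uses exactly the simulation argument you considered and rejected. It takes the optimal policy $\widetilde u$ of the original problem and writes $Bu\,dt=B\widetilde u\,dt+d\widetilde\xi$, with $d\widetilde\xi\sim\mathcal N(0,\Sigma^+)$ lying in $\mathrm{Range}(B)=\R^l$ by Assumption~\ref{assume:b_columns}. It then concludes that, since $\widetilde u$ is optimal, no such deviation can lower the expected cost. Your objection is correct. Optimality of $\widetilde u$ only rules out deviations whose control cost is paid, while in the noisier task $d\widetilde\xi$ costs nothing. As you suspected, the statement is false for arbitrary $q,\phi$. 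Take $n=l=m=1$, $f=0$, $B=1$, $q=0$, and $\phi(x)=-\tfrac c2x^2$ with $0<c(t_f-t_0)<R$. Then $V(x,t)=\tfrac12P(t)x^2+\tfrac{\Sigma}{2}\int_t^{t_f}P(\tau)\,d\tau$ with $P(\tau)=-cR/(R-c(t_f-\tau))<0$. So $V$ strictly decreases when $\Sigma$ grows with $R$ fixed. For $\lambda>\Sigma R$ the pair $(\lambda/R,R)$ is even a task in the sense of \eqref{loose_constraint}.

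So your proposal has a genuine gap: the curvature condition $\mathrm{Tr}(\Sigma^+_{(f)}\nabla_{xx}\tilde V)\ge 0$ is never established. But that gap is in the lemma itself and can only be closed by adding a hypothesis, and your reduction is the right repair. The comparison step is sound. It can be done by a plain verification argument, which is what your ``induced controls'' remark amounts to. Run the noisier task's optimal feedback $u^\dagger=-(R+R^+)^{-1}B_{(f)}^T\nabla_x\tilde V$ in the less noisy system. It\^o's formula and the HJB for $\tilde V$ then give its cost as $\tilde V-\tfrac12\mathbb E\int_t^{t_f}\mathrm{Tr}(\Sigma^+_{(f)}\nabla_{xx}\tilde V)\,d\tau$. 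This exposes exactly the term the paper's argument silently drops.

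Your proposed convexity check, however, has both signs reversed. With tilted weights $w\propto e^{-S/\lambda}$ one gets $\nabla_{xx}\tilde V=\mathbb E_w[\nabla_{xx}S]-\lambda^{-1}\mathrm{Cov}_w(\nabla_x S)$. So convexity of $S$ does not give the sign term by term. The tool that works is Pr\'ekopa's theorem (marginals of log-concave functions are log-concave). Apply it to $e^{-S(x,\omega)/\lambda}$ times the Gaussian density of the noise $\omega$. This makes $\tilde\Psi$ log-concave, hence $\tilde V$ convex, whenever $S$ is jointly convex in $(x,\omega)$. That holds, e.g., for affine $f$ with convex $q,\phi$, which covers the paper's LQR example with $Q\ge\mathbf 0$. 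Note also that $\tilde V=-\lambda\log\tilde\Psi$ is available only when $(\Sigma+\Sigma^+,R+R^+)$ satisfies the coupling, whereas the lemma quantifies over all $\Sigma^+,R^+\ge\mathbf 0$.
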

\begin{proof}
    This property is true by definition for $R^+$, as the control cost at each timestep will be at least that of the original for any possible control sequence.
    \begin{equation}
        \begin{split}
            u^{T} R u \leq u^{T} (R + R^{+}) u = u^T R u + u^T R^{+} u, \\ \forall u \in \mathbb{R}^{m}, R^{+} \geq \mathbf{0}
        \end{split}
    \end{equation}
    
    As controls are linear with dynamics and quadratic with cost, there exists some optimal control sequence $\widetilde{u}$ \cite{zhou1999stochastic}. Since only directly actuated state dimensions may be affected by noise, per \eqref{bs_partition}, any additional noise $\Sigma^+ \geq \mathbf{0}$ may be considered a stochastic addition to the state-effect of the existing policy.
    \begin{equation}
        B u dt = B\widetilde{u}dt + d\widetilde{\xi}, \quad d\widetilde{\xi} \sim \mathcal{N}(0, \Sigma^+)
    \end{equation}
    Any deviation from the deterministic control sequence within the range of $B$ results in a different but still admissible control policy. Since $\widetilde{u}$ is an optimal solution, no deviation within the admissibly policy set can improve expected cost. From Assumption \ref{assume:b_columns}, $d \widetilde{\xi} \in \mathrm{Range}(B) = \R^{l}$ everywhere. Therefore, the expected cost with covariance $\Sigma + \Sigma^+$ will be greater-than-or-equal to the original.
\end{proof}

Since the solution for $R^+$ is unique for a given $\lambda$ and $\Sigma^+$ under Assumption \ref{assume:b_columns}, $\lambda$ and $\Sigma^+$ can entirely parameterize the space of valid tasks. However, $\Sigma^+$ is now bounded to values that admit a solution for $R^+$. Substituting \eqref{loose_constraint} into inequality $B(R + R^+)^{-1}B^T \leq BR^{-1}B^T$ and rearranging, we obtain:
\begin{equation}\label{sigma_bounds}
    \mathbf{0} \leq \Sigma^+ \leq \lambda BR^{-1}B^T - \Sigma
\end{equation}

\begin{lem}\label{lem: min_lambda}
    The optimal $\lambda$, denoted $\lambda^*$, is always the minimum value admitting solutions to the following linear constraint.
    \begin{equation}\label{lambda_min_eq}
        \min_{\lambda} \quad \mathrm{s.t} \quad \Sigma \leq \lambda B R^{-1} B^{T} 
    \end{equation}
    This value may be calculated as a max eigenvalue problem:
    \begin{equation}
        \lambda^* = \Lambda_{max}((B R^{-1} B^{T})^{-\frac{1}{2}} \Sigma (B R^{-1} B^{T})^{-\frac{1}{2}})
    \end{equation}
\end{lem}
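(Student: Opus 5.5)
Two things need showing: that the optimal $\lambda$ is the smallest feasible one, and that this smallest value is the stated maximum eigenvalue. The second is linear algebra; the first is a monotonicity argument built on Lemma \ref{lem: posdef_r_s}.

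\textbf{Feasibility.} Write $M = BR^{-1}B^T$. By Assumption \ref{assume:b_columns}, $B$ has full row rank $l$ and $R>0$, so $M$ is positive definite. Hence $M^{-1/2}$ exists and congruence preserves the Loewner order. The constraint $\Sigma \leq \lambda M$ is therefore equivalent to
\begin{equation*}
M^{-\frac12}\Sigma M^{-\frac12} \leq \lambda I .
\end{equation*}
This holds exactly when $\lambda \geq \Lambda_{max}(M^{-1/2}\Sigma M^{-1/2})$. The feasible set is thus the closed half-line $[\lambda^*,\infty)$, and its minimum is the stated eigenvalue. By \eqref{sigma_bounds}, a task with parameter $\lambda$ exists precisely when this constraint holds, so this also identifies the range of admissible $\lambda$.

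\textbf{Optimality of the minimum.} Take any feasible $\lambda_2 > \lambda_1 \geq \lambda^*$ and any task $(\Sigma+\Sigma_2^+,\,R+R_2^+)$ at $\lambda_2$. I will build a task at $\lambda_1$ whose $R$ and $\Sigma$ are both Loewner-no-larger, and then invoke Lemma \ref{lem: posdef_r_s}.

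A first attempt is to keep the covariance fixed and enlarge $R$. The trouble is that a task at $\lambda_1$ with the same $\Sigma_2^+$ generally requires a larger $R$, not a smaller one. So I will instead keep the control cost fixed, $R_1^+ = R_2^+$, and set
\begin{equation*}
\Sigma+\Sigma_1^+ = \lambda_1 B(R+R_2^+)^{-1}B^T = \tfrac{\lambda_1}{\lambda_2}\,(\Sigma+\Sigma_2^+) .
\end{equation*}
This has the same control cost and Loewner-smaller noise. It is not automatically a task, because $\Sigma_1^+ \geq 0$ is needed. That requirement reads $\lambda_1 B(R+R_2^+)^{-1}B^T \geq \Sigma$, which can fail for arbitrary $R_2^+$.

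\textbf{Main obstacle.} If the construction above fails, I would fall back on comparing the two whole task families rather than individual tasks. Consider the map $\lambda \mapsto \min_{\Sigma^+} V$. The $\lambda_2$ family is $\{\lambda_2 B(R+R^+)^{-1}B^T \geq \Sigma\}$. Rescaling $R+R^+$ by $\lambda_1/\lambda_2$ maps it into $\{\lambda_1 B(R')^{-1}B^T \geq \Sigma\}$ with $R' = \tfrac{\lambda_1}{\lambda_2}(R+R^+)$, but $R' \geq R$ can then fail. So the real difficulty is showing that the optimum over the $\lambda_2$ family is attained, or dominated, by a task whose $R+R^+$ is at least $\tfrac{\lambda_2}{\lambda_1}R$. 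Here $\lambda$ plays the role of the temperature in $V=-\lambda\log\Psi$, and what makes this dominance plausible is that at $\lambda^*$ the interval \eqref{sigma_bounds} is tightest, so there is least excess noise to add. I expect to settle this step either with that domination argument, or by restricting the claim to the paper's intended meaning: $\lambda^*$ is the natural choice because it minimizes the forced augmentation $\lambda M - \Sigma$ in the Loewner sense, which is monotone in $\lambda$.
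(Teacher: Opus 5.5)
\textbf{Verdict: genuine gap.} Your proposal does not prove the first claim of the lemma, that the optimal task sits at $\lambda^*$. Your proof of the second claim is correct. Congruence by $(BR^{-1}B^T)^{-1/2}$ turns $\Sigma\le\lambda BR^{-1}B^T$ into $(BR^{-1}B^T)^{-1/2}\Sigma(BR^{-1}B^T)^{-1/2}\le\lambda I$. That is the same result the paper gets by citing the generalized-eigenvalue fact.

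\textbf{Why the paper's argument does not supply the missing step.} The paper fixes $\hat R=R+R^+$ and raises $\lambda$. This adds $\Delta\lambda\,B\hat R^{-1}B^T\ge 0$ to the covariance, so Lemma~\ref{lem: posdef_r_s} says the cost cannot drop. But that map runs from $\lambda_1$ up to $\lambda_2$. It is the mirror image of the construction you discarded, and your objection (that $\lambda_1 B(R+R_2^+)^{-1}B^T\ge\Sigma$ can fail) is exactly why it does not give the direction the lemma needs. What it does show is that, for fixed $\hat R$, the best $\lambda$ is the smallest one compatible with $\hat R$. Equivalently, one may restrict to tasks with singular $\Sigma^+$. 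Since that smallest value depends on $\hat R$, it does not pin the optimum to $\lambda^*$.

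\textbf{Why your domination fallback also fails.} Lemma~\ref{lem: posdef_r_s} only compares tasks that are ordered in both $\Sigma'=\Sigma+\Sigma^+$ and $R'=R+R^+$. That joint ordering genuinely fails across different $\lambda$, as the following example shows.
\begin{itemize}
\item Take $B=R=I$ and $\Sigma=\mathrm{diag}(1,\epsilon)$ with $0<\epsilon\le 1/5$, so $\lambda^*=1$.
\item A task at $\lambda^*$ needs $\Sigma\le\Sigma'\le I$. This forces $\Sigma'=\mathrm{diag}(1,s)$ with $s\in[\epsilon,1]$, and $R'=(\Sigma')^{-1}$.
\item At $\lambda_2=2$ take
\[
\Sigma_2'=\begin{bmatrix} 3/2 & b\\ b & 1\end{bmatrix},\qquad b^2=\tfrac{2}{5},\qquad R_2'=2(\Sigma_2')^{-1}.
\]
This is a valid task because $\Sigma\le\Sigma_2'\le 2I$.
\item For $\epsilon=1/5$ the slack $\Sigma_2'-\Sigma$ is singular. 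So $2$ is already the smallest $\lambda$ compatible with $R_2'$, and the paper's fixed-$\hat R$ argument cannot lower it.
\item A $\lambda^*$-task lies below this one in both matrices only if $\mathrm{diag}(1,s)\le\Sigma_2'$, which forces $s\le 1/5$.
\item It also needs $\mathrm{diag}(1,s)\ge\tfrac12\Sigma_2'$, which is equivalent to $R'\le R_2'$ and forces $s\ge 9/10$.
\item Both cannot hold, so no $\lambda^*$-task dominates this $\lambda_2$-task.
\end{itemize}

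\textbf{Consequence.} No argument resting on Lemma~\ref{lem: posdef_r_s} can prove that $\lambda^*$ is always optimal. You would need problem-specific information about $V$ (the dynamics and costs), or you must settle for a weaker statement. Your second fallback, that $\lambda^*$ minimizes the forced augmentation $\lambda BR^{-1}B^T-\Sigma$, is such a weakening, not the lemma as stated.
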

\begin{proof}
For a given valid task defined by $\hat{\Sigma}$, $\hat{R}$, and $\hat{\lambda}$ satisfying $\hat{\Sigma} = \hat{\lambda}B{\hat{R}^{-1}}B^T$, any increase in $\lambda$ can be considered a positive semi-definite addition to covariance.
\begin{equation}
\begin{split}
    \hat{\Sigma} + \hat{\Sigma}^+ = (\hat{\lambda} + \Delta \hat{\lambda}) B{\hat{R}^{-1}}B^T \\
    \hat{\Sigma}^+ = \Delta \hat{\lambda} BR^{-1}B^T
\end{split}
\end{equation}
As demonstrated in Lemma \ref{lem: posdef_r_s}, adding to either variable positive semi-definitely results in a greater-than-or-equal cost-to-go. By monotonicity, the minimum possible $\lambda$ must be optimal.

The minimum possible $\lambda$ is the minimum that admits a solution to \eqref{loose_constraint}. Substituting $\Sigma \leq \Sigma + \Sigma^+$ and $\lambda B (R + R^+)^{-1} B^T \leq \lambda B R^{-1} B^T$ into \eqref{loose_constraint}, we find $\Sigma \leq \lambda B R^{-1} B^T$ as the condition for viability.

According to \cite{boyd_geneig}, LMIs of the form $A \leq \lambda B$ may be solved as the maximum eigenvalue of $B^{-\frac{1}{2}}AB^{-\frac{1}{2}}$.
\end{proof}

Combining Lemma \ref{lem: posdef_r_s} and Lemma \ref{lem: min_lambda}, we find that task choice is parameterized solely by $\Sigma^+$, and the optimal task choice must lie along the following linear---but singular---constraint:
\begin{equation}\label{sigma_bounds_final}
    \mathbf{0} \leq \Sigma^+ \leq \lambda^* B R^{-1} B^T - \Sigma
\end{equation}

Because the expected cost-to-go of any task approximated from a given problem will always be greater than that of the original problem, it serves as an upper bound on the problem's expected cost. Although it is not possible to access the true cost directly with path integral methods, by minimizing expected cost over the space of all valid tasks, we minimize this upper bound on the expected cost-to-go.

\section{Task Choice}\label{sec:task_choice}
\noindent In this section, we propose a method of efficiently optimizing expected cost over the space of valid tasks formulated in Section \ref{sec:problem_approximation}. First, we will demonstrate that it is possible to re-weight existing Monte Carlo samples drawn from any given Gaussian distribution in terms of any other Gaussian probability measure \textit{without resampling} in discrete time. Then, we will formulate the problem of minimizing expected cost as an optimization problem over the covariance of this new probability measure.

\subsection{Covariance Scaling}\label{subsec:covariance_scaling}
\noindent To find the expected cost under an updated covariance, $\Sigma + \Sigma^+$, we will employ the likelihood ratio between the original probability measure and one defined by updated covariance. The approach is similar to the generalized importance sampling derived in \cite{williams2017model}, although the purpose is different.  

\begin{assumption}\label{assume:sigma}
    $\Sigma$ is non-singular.
\end{assumption}

\begin{lem}\label{lem:covariance_scaling}
    Under Assumption \ref{assume:sigma}, the log-transformed expected cost $\Psi$ for a task parameterized by $\Sigma^{+}$ is given by
    \begin{multline}\label{scaled_feynman_kac}
        \Psi = \mathbb{E}^{P} \biggl[ \biggl[ \frac{|\Sigma|}{|\Sigma + \Sigma^+|} \biggr]^{\frac{N}{2}} \\
        \cdot \exp \biggl( \frac{1}{2}
        \mathrm{Tr}((\Sigma^{-1} - (\Sigma + \Sigma^+)^{-1})Z) - \frac{S}{\lambda^*} \biggr) \biggr] \\
        Z = \sum_{t=t_0}^{t_f} z(t)  z(t)^T, \qquad
        z(t)=\frac{\Delta x(t)}{\Delta t} - f(x(t), t)
    \end{multline}
    where $N$ is the number of time steps and $Z$ is the scatter matrix of deviations from deterministic dynamics.
\end{lem}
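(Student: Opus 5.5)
The plan is to start from the Feynman--Kac representation \eqref{fk_transformed} for the task with covariance $\Sigma+\Sigma^+$ and temperature $\lambda^*$, and to rewrite it as an expectation under the original measure $P$ with a Radon--Nikodym factor. In discrete time (Euler--Maruyama with step $\Delta t$ and $N$ steps), the uncontrolled dynamics restricted to the actuated subspace read $\Delta x(t) = f(x(t),t)\Delta t + \Delta\xi(t)$. The increments $\Delta\xi(t)$ are independent Gaussians with covariance proportional to the noise covariance. The trajectory density is then a product of $N$ Gaussian transition densities in the variable $z(t)=\Delta x(t)/\Delta t - f(x(t),t)$. The unactuated coordinates evolve deterministically given the actuated ones, so they contribute identical (degenerate) factors under both measures and cancel in the ratio. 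Writing $\Psi_{\Sigma^+}=\mathbb{E}^{Q}[\exp(-S/\lambda^*)]$ with $Q$ the uncontrolled measure under $\Sigma+\Sigma^+$, we have $\Psi_{\Sigma^+}=\mathbb{E}^P[(dQ/dP)\exp(-S/\lambda^*)]$. The crucial point is that $S$ depends only on the path, not on which measure generated it.

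The key step is computing $dQ/dP$ explicitly. Each Gaussian transition density has the form $(2\pi)^{-l/2}|C|^{-1/2}\exp(-\tfrac12 z^T C^{-1} z)$, where $C$ is $\Sigma$ or $\Sigma+\Sigma^+$ (with the $\Delta t$ scaling absorbed into the convention implicit in the statement). Assumption \ref{assume:sigma} makes $\Sigma^{-1}$ exist, so $P$-densities are nondegenerate on $\R^l$. It also makes $\Sigma+\Sigma^+\geq\Sigma>0$ invertible, so $Q\ll P$ and the ratio is well defined. Taking the ratio per step, multiplying over the $N$ steps, and using $z^T A z=\mathrm{Tr}(A zz^T)$ together with linearity of the trace gives
\[
\frac{dQ}{dP}=\Bigl[\frac{|\Sigma|}{|\Sigma+\Sigma^+|}\Bigr]^{N/2}\exp\Bigl(\tfrac12\mathrm{Tr}\bigl((\Sigma^{-1}-(\Sigma+\Sigma^+)^{-1})Z\bigr)\Bigr),
\]
with $Z=\sum_t z(t)z(t)^T$. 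Substituting this into the expectation yields \eqref{scaled_feynman_kac}.

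I also need to justify that $\lambda^*$, not some other $\lambda$, appears in the exponent. By Lemma \ref{lem: min_lambda} the optimal task lies on \eqref{sigma_bounds_final} with $\lambda=\lambda^*$, and $R+R^+$ is then determined by \eqref{loose_constraint}. The task therefore satisfies the PI coupling, and the log transform plus Feynman--Kac from Section \ref{sec:path_integral_control} applies with $\lambda^*$.

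I expect the main obstacle to be bookkeeping rather than any deep step. First, the $\Delta t$ scaling must be consistent, since the true increment covariance is $\Sigma\Delta t$; it cancels in the determinant ratio but rescales the quadratic form. I would fix the convention so that the covariance of $z$ is exactly the matrix appearing in the statement. Second, the degenerate unactuated block must be argued to cancel. Third, the formula should be stated as exact for the discretized process, not for the continuous-time one: in continuous time, measures with different diffusion covariances are mutually singular, which is precisely why the lemma is posed in discrete time.
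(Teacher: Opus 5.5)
Your proposal is correct and follows essentially the same route as the paper. You change measure from $Q$ (covariance $\Sigma+\Sigma^+$) to $P$, factor the likelihood ratio over time steps via the Markov property into Gaussian transition-density ratios in $z(t)$, and collapse the product into the determinant ratio and $\mathrm{Tr}((\Sigma^{-1}-(\Sigma+\Sigma^+)^{-1})Z)$. The bookkeeping you flag (the $\Delta t$ scaling of the quadratic form, cancellation of the degenerate unactuated block, exactness only for the discretized process) is passed over silently in the paper, which simply takes the covariance of $z$ to be $\Sigma$, so your treatment is if anything more careful.
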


\begin{proof}
    Define new gaussian probability measure $Q$ with covariance $\Sigma + \Sigma^+$. Assumption \ref{assume:sigma} implies that $Q \sim P$, as both probability measures assign non-zero probability to all values in $\R^{l \times l}$. If $Q \sim P$, then there exists some $f(\omega)$ such that $dQ(\omega) = g(\omega)dP(\omega)$, and likewise some $g_x(x_{[t_0,t]})$ such that $dQ_x(x_{[t_0,t]}) = g_x(x_{[t_0,t]})dP_x(x_{[t_0,t]})$. Applying this change of measure to \eqref{fk_transformed}, we find:
    
    \begin{equation}\label{rebased_feyman_kac}
        \mathbb{E}^{Q} \biggl[ \exp \biggl( - \frac{S}{\lambda} \biggr) \biggr] = 
        \mathbb{E}^{P} \biggl[ g_{x}(x_{[t_0, t_f]}) \cdot \exp \biggl( - \frac{S}{\lambda} \biggr) \biggr]
    \end{equation}
    By the Markov property, the likelihood ratio is the product of the probability of the next state, given the current state, and the probability of its history. Then, from the system definition in \eqref{system_def}, the likelihood of the next state given the current is a function of the state derivative.
    \begin{equation*}
    \begin{split}
        \frac{dQ_x}{dP_x} (x_{[t_0,t + \Delta t]})
        = \frac{dQ_x}{dP_x} (x_{t + \Delta t} | x_{t}) \cdot \frac{dQ_x}{dP_x} (x_{[t_0,t]}) \\
        = \bigg[ \frac{dQ_x}{dP_x} \bigg( \frac{\Delta x_{t + \Delta t}}{\Delta t}  
        \bigg) \bigg] \cdot \frac{dQ_x}{dP_x} (x_{[t_0,t]})
    \end{split}
    \end{equation*}
    Combining the recurrence relation, we obtain:
    \begin{equation}
        g_{x}(x_{[t_0, t_f]}) = \prod_{i=0}^{N} \bigg[
            \frac{dQ_x}{dP_x} \bigg( \frac{\Delta x_{t_i + \Delta t}}{\Delta t}  
        \bigg)
        \bigg]
    \end{equation}
    Substituting the Gaussian PDF with mean $f(x, t)$ for both measures yields
    \begin{equation}\label{likelihood_ratio_final}
    \begin{split}
        g_{x} = \biggl[ \frac{| \Sigma_{P} |}{| \Sigma_{Q} |} \biggr]^{\frac{N}{2}} \cdot 
        \exp \biggl( \frac{1}{2}
        \mathrm{Tr}((\Sigma_{P}^{-1} - \Sigma_{Q}^{-1})Z) 
        \biggr)
    \end{split}
    \end{equation}
    where multiplied exponentials have been contracted into a summation inside the exponential function and path-invariant terms have been either canceled or been combined into the determinant ratio.
    Finally, substituting \eqref{likelihood_ratio_final} and $\Sigma_Q$ and $\Sigma_P$ definitions into \eqref{rebased_feyman_kac}, we obtain \eqref{scaled_feynman_kac}.
\end{proof}

\begin{remark}\label{remark:discrete}
    The likelihood ratio method as applied in this section is only possible in discrete time. In continuous time, likelihood ratios between measures of different covariance are always zero or positive infinity. This can be seen by taking the limit of \eqref{likelihood_ratio_final} as $\Delta t \rightarrow 0$.
\end{remark}

\subsection{Task Optimization}
\noindent Lemma \ref{lem:covariance_scaling} provides an expression for $\Psi$ in terms of additive covariance $\Sigma^+$. As $\Sigma^+$ parameterizes the entire space of valid tasks, maximizing \eqref{scaled_feynman_kac} gives the optimal task, providing the lowest upper bound on the original problem's expected cost. In its current form, the matrix decision variable is inverted. However, we show that with a simple change of variables, \eqref{scaled_feynman_kac} becomes a practical optimization problem over the space of all valid tasks.

\begin{prop}\label{prop:opt_problem}
    After the invertible change of variables
    \begin{equation}\label{D_def}
        (\Sigma + \Sigma^+)^{-1} = \Sigma^{-1} - D
    \end{equation}
    the optimal task choice may be found as the solution to the following optimization problem.
    \begin{equation}\label{D_opt_final}
        \max_{D} \Bigg\{ 
            {(|\Sigma||\Sigma^{-1} - D|})^{\frac{N}{2}} 
            \sum_{k}^{K}
            \exp \bigg( \frac{1}{2}
                \mathrm{Tr}(DZ_k) - \frac{S_k}{\lambda^*}
            \bigg)
        \Bigg\}
    \end{equation}
    \begin{equation}\label{D_bounds}
    0 \leq D \leq \Sigma^{-1} - \lambda^{*^{-1}} (B R^{-1} B^{T})^{-1}
    \end{equation}
\end{prop}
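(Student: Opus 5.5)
The plan is to start from Lemma \ref{lem:covariance_scaling} and substitute the change of variables \eqref{D_def} directly into \eqref{scaled_feynman_kac}. The optimal task minimizes $V=-\lambda^*\log\Psi$. Since $\lambda^*>0$ and $\log$ is strictly increasing, this is the same as maximizing $\Psi$ over admissible $\Sigma^+$.

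\textbf{Invertibility of the change of variables.} The map $\Sigma^+\mapsto D=\Sigma^{-1}-(\Sigma+\Sigma^+)^{-1}$ is a bijection from the set of $\Sigma^+\ge 0$ onto $\{D:\ D\ge 0,\ \Sigma^{-1}-D>0\}$. Its inverse is $\Sigma^+=(\Sigma^{-1}-D)^{-1}-\Sigma$. This uses Assumption \ref{assume:sigma} and the fact that matrix inversion is well defined and order-reversing on positive definite matrices. So optimizing over $D$ is equivalent to optimizing over $\Sigma^+$.

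\textbf{The objective.} After the substitution:
\begin{itemize}
\item the trace term becomes $\frac12\mathrm{Tr}(DZ)$;
\item the determinant ratio becomes $|\Sigma|/|\Sigma+\Sigma^+| = |\Sigma|\,|(\Sigma+\Sigma^+)^{-1}| = |\Sigma|\,|\Sigma^{-1}-D|$;
\item this factor is path-invariant, so it comes out of the expectation.
\end{itemize}
Replacing $\mathbb{E}^P$ by the Monte Carlo average over $K$ uncontrolled sample paths $(Z_k,S_k)$ drawn from $P$ gives \eqref{D_opt_final}. The factor $1/K$ does not depend on $D$ and is dropped. The key point is that the $Z_k,S_k$ do not depend on $D$, so one sample set serves every candidate task.

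\textbf{The constraints.} Apply the order-reversing inversion to \eqref{sigma_bounds_final}:
\begin{itemize}
\item The lower bound $\Sigma^+\ge 0$ gives $(\Sigma+\Sigma^+)^{-1}\le\Sigma^{-1}$, i.e.\ $D\ge 0$.
\item The upper bound $\Sigma+\Sigma^+\le\lambda^*BR^{-1}B^T$ gives $(\Sigma+\Sigma^+)^{-1}\ge\lambda^{*-1}(BR^{-1}B^T)^{-1}$, i.e.\ $D\le\Sigma^{-1}-\lambda^{*-1}(BR^{-1}B^T)^{-1}$. This is \eqref{D_bounds}.
\end{itemize}
Here $BR^{-1}B^T$ is invertible by Assumption \ref{assume:b_columns}. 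The upper bound also ensures $\Sigma^{-1}-D\ge\lambda^{*-1}(BR^{-1}B^T)^{-1}>0$, so the determinant is positive and the inverse map is well defined on the feasible set.

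\textbf{Main obstacle.} The delicate step is justifying that the inequalities may be inverted with reversed direction when $\Sigma+\Sigma^+$ is only bounded by a possibly singular-looking constraint. I would handle this by noting that everything lives on $\R^{l\times l}$, where all matrices involved are positive definite. I would then cite the standard fact that $0<A\le C$ implies $C^{-1}\le A^{-1}$, proved via the congruence $C^{-1/2}AC^{-1/2}\le I$. Finally, Lemma \ref{lem: min_lambda} justifies fixing $\lambda=\lambda^*$ throughout, so $D$ is the only remaining decision variable.
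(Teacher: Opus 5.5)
Your proposal is correct and takes essentially the same route as the paper. Both substitute the change of variables into Lemma \ref{lem:covariance_scaling}, use $|(\Sigma+\Sigma^+)^{-1}| = |\Sigma+\Sigma^+|^{-1}$ to obtain the determinant factor, replace the expectation by a $K$-sample average, invert \eqref{sigma_bounds_final} using order reversal of inversion on positive definite matrices, and conclude via bijectivity of $\Sigma^+\mapsto D$. Your version is more careful than the paper's in three places: you identify the exact image of the map, you check positive definiteness of every matrix being inverted, and you correctly write $(\Sigma+\Sigma^+)^{-1}$ where the paper's displayed inequality has the typo $(\Sigma-\Sigma^+)^{-1}$.
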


\begin{proof}
    Substituting \eqref{D_def} into \eqref{scaled_feynman_kac}, simplifying, and converting the expectation to an empirical average over K Monte Carlo paths gives
    \begin{multline}\label{D_opt_real}
        \max_{D} \Bigg\{ 
            {\bigg[ \frac{|\Sigma|}{|(\Sigma^{-1} - D)^{-1}|}} \bigg]^{\frac{N}{2}} \cdot \\
            \sum_{k}^{K}
            \exp \bigg( \frac{1}{2}
                \mathrm{Tr}(DZ_k) - \frac{S_k}{\lambda^*}
            \bigg)
        \Bigg\}
    \end{multline}
    where $Z_k$ and $S_k$ are the scatter matrix and total cost terms for the $k$-th sample, respectively. Recognizing that the determinant of an inverse is the inverse of the determinant and canceling terms simplifies the expression to \eqref{D_opt_final}.
    
    Since $A \leq B \rightarrow B^{-1} \leq A^{-1}, \quad \forall A, B > \mathbf{0}$, then \eqref{sigma_bounds_final} can be inverted:
    \begin{equation*}
        \lambda^{*^{-1}} (B R^{-1} B^{T})^{-1} \leq (\Sigma - \Sigma^{+})^{-1} \leq \Sigma^{-1}
    \end{equation*}
    After substituting in \eqref{D_def} and rearranging, we obtain \eqref{D_bounds}.

    $\Sigma^+$ subject to \eqref{sigma_bounds_final} parameterizes the entire space of valid and potentially optimal tasks. Since $\Sigma^+ \rightarrow D$ is a bijective mapping, $D$ subject to \eqref{D_bounds} also parameterizes the entire task space. Therefore, the global maximizer of \eqref{D_opt_final} is the globally optimal task.
\end{proof}

\subsection{Numerical Considerations}\label{subsec:numerical_considerations}
\noindent Since the upper bound on $D$ given in \eqref{D_bounds} will always be singular, solving the problem in \eqref{D_opt_final} \textit{as is} would be quite difficult. Instead, the problem must be reformulated in the $D$ upper bound's non-singular subspace. 

This non-singular subspace may be found with eigen-decomposition of the upper bound of $D$:
\begin{equation}
    \Sigma^{-1} - {\lambda^*}^{-1} B R^{-1} B^{T} = E M E^T
\end{equation}
where eigenvalues are sorted in ascending order and at least one eigenvalue will be $0$ due to the choice of $\lambda^*$. Therefore, under the change of basis $E$, we may partition $D$ and $\Sigma^{-1} - \lambda^* B R^{-1} B^{T}$ along the same indices.
\begin{equation}
    D = E\begin{bmatrix} 
        0 & 0 \\ 
        0 & \widetilde{D} 
    \end{bmatrix}E^T, 
    \qquad
    EME^T = E \begin{bmatrix}
        0 & 0 \\
        0 & \widetilde{M}
    \end{bmatrix} E^T
\end{equation}
Determinants and quadratic forms are independent of basis, so \eqref{D_opt_final} can be represented in the basis of $E$ as
\begin{multline}
    \max_{\widetilde{D}} \Bigg\{ 
        {(|\Sigma||E^{T} (\Sigma^{-1} - D) E|})^{\frac{N}{2}} \cdot \\
        \sum_{k}^{K}
        \exp \bigg( \frac{1}{2}
             \mathrm{Tr}(\widetilde{Z}_k \widetilde{D}) - \frac{S_k}{\lambda^*}
        \bigg)
    \Bigg\}
\end{multline}
where $\widetilde{Z} = \big[ E^{T} Z E \big]_{(p)}$ and the $(p)$ index denotes the bottom-right sub-matrix associated with positive eigenvalues in $M$ and excluding the dimensions where $E^T D E$ is zero.

Finally, due to the block structure of $D$, we can apply a Schur Complement to the determinant to reduce the dimensionality of the determinant. First, partition $E^T \Sigma^{-1} E$ along the same dimensions as $D$.
\begin{equation}
    T = E^{T} \Sigma^{-1} E = \begin{bmatrix}
        T_{(z)} & T_{(c)}^T \\
        T_{(c)} & T_{(p)}
    \end{bmatrix}
\end{equation}
Then, perform the Schur Complement on the matrix determinant and consolidate constants into $H$.
\begin{equation}\label{D_opt_reduced}
    \begin{split}
    \max_{\widetilde{D}} \Bigg\{ 
        {(|\Sigma||T_{(z)}||H - \widetilde{D}|})^{\frac{N}{2}} 
        \sum_{k}^{K}
        \exp \bigg( \frac{1}{2}
            \mathrm{Tr}(\widetilde{Z} \widetilde{D}) - \frac{S_k}{\lambda^*}
        \bigg)
    \Bigg\} \\
    \mathbf{0} \leq \widetilde{D} \leq \widetilde{M}, \qquad
    H = T_{(p)} - T_{(c)} T_{(z)}^{-1} T_{(c)}^T
    \end{split}
\end{equation}

With this change of variables, we have an optimization problem over the full-rank PSD decision variable $\widetilde{D}$.

\subsection{Wishart Importance Sampling}\label{subsec:wishart}
\noindent In previous sections, we found an optimization problem theoretically yielding the optimal task, yet practical issues remain.
Each sample's contribution to $\Psi(x, t)$ will decay rapidly where $Z_k$ is distant from $(\Sigma^{-1} - D)^{-1}$. However, $Z_k$ will tend towards the sampling distribution's covariance, so the estimate given in \eqref{D_opt_final} will only be accurate in that neighborhood. Accuracy across the entire optimization domain would require drawing samples from a continuous range of covariances. Although such a scheme complicates importance sampling, the Normal-Inverse-Wishart (NIW) prior provides a closed-form solution for the marginal likelihood of $Z$. The NIW models the process of drawing a covariances matrix from an Inverse-Wishart distribution, then drawing $N$ samples from the Gaussian defined by the matrix. However, $\Sigma^+$ is under the singular constraint of \eqref{sigma_bounds}, and most samples drawn from the Inverse-Wishart distribution will not lie along this constraint, contributing little to the cost estimate. To address this shortfall, we define the Inverse-Wishart prior over the non-singular subspace of \eqref{sigma_bounds}:
\begin{equation}
    \Sigma^+ = G\begin{bmatrix}
        0 & 0 \\ 0 & \Xi_+
    \end{bmatrix}G^T, \qquad \lambda^* B R^{-1} B^T - \Sigma = G V G^T
\end{equation}
where $\Xi_+ \in \R^{(l - 1) \times (l - 1)}$ is PSD and $G$ is the eigenvector matrix of $\lambda^* B R^{-1} B^T - \Sigma$. Then, adding $\Sigma$ to get the full covariance matrix, we transform $\Sigma$ into the basis $G$ and factor it into a singular matrix $[\alpha, \beta][\alpha, \beta]^T$ and the partitioned block sub-matrix, $\Xi_\Sigma$.
\begin{equation}
    \Sigma + \Sigma^+ = G \bigg(\begin{bmatrix}
        \alpha^2 & \alpha \beta^T \\ \alpha \beta & \beta \beta^T
    \end{bmatrix} + \begin{bmatrix}
        0 & 0 \\ 0 & \Xi_\Sigma + \Xi_+
    \end{bmatrix} \bigg) G^T
\end{equation}
Since we need to sample a single PSD matrix from the prior, absorb $\Xi_+$ and $\Xi_\Sigma$ into $\Xi$. Furthermore, we can transform the resulting $Z$ draw into the basis of $G$ to simplify the Wishart definition. We now have a well defined NIW system:
\begin{equation}
\begin{split}
    \Xi \sim \mathcal{IW}_{(l-1) \times (l-1)}(v, Y), \qquad Z_G \sim \mathcal{W}_{l\times l} (U, N) \\
    U = \begin{bmatrix}
        \alpha^2 & \alpha \beta^T \\ \alpha \beta & \beta \beta^T + \Xi
    \end{bmatrix}, \qquad Z_G = G^T Z G
\end{split}
\end{equation}

\begin{lem}
\label{lem:constrained_marginal}
    Under the constrained Inverse-Wishart prior defined over the non-singular subspace of \eqref{sigma_bounds}, the marginal likelihood of $Z$ is
    \begin{equation}\label{marg_final}
    \begin{split}
        p(Z) = \frac{\exp\big( -\frac{Z_G^{(z)}}{2\alpha^2} \big) \Gamma_{l-1}(\frac{N + v}{2}) |Y|^{v/2} |Z|^{(N - l - 1)/2}}
        {2^{N/2}  \alpha^N \Gamma_l(\frac{N}{2}) \Gamma_{l-1}(\frac{v}{2}) |\bar{Z}_G + Y|^{(N + v)/2}} \\
        \bar{Z}_G := \alpha^{-2} Z_G^{(z)}\beta \beta^T + Z_G^{(p)} - 2 \alpha^{-1}Z_G^{(c)} \beta^T
    \end{split}
    \end{equation}
\end{lem}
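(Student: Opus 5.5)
The plan is to compute $p(Z)=\int p(Z\mid \Xi)\,p(\Xi)\,d\Xi$ directly. I will work in the basis $G$, where the Wishart likelihood of $Z_G$ has scale matrix $U$. Since $G$ is orthogonal, the change of variables $Z_G=G^TZG$ has unit Jacobian, and $|Z_G|=|Z|$. So
\begin{equation*}
p(Z\mid\Xi)=\frac{|Z|^{(N-l-1)/2}\exp\big(-\tfrac12\mathrm{Tr}(U^{-1}Z_G)\big)}{2^{Nl/2}|U|^{N/2}\Gamma_l(\tfrac N2)}.
\end{equation*}
The key structural step is to handle $U$, which is not a block-diagonal function of $\Xi$, by a Schur complement on its first (scalar) block. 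Its top-left entry is $\alpha^2$. Its Schur complement is $\beta\beta^T+\Xi-\alpha^{-2}\alpha\beta\,\alpha\beta^T=\Xi$. Hence $|U|=\alpha^2|\Xi|$.

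Next I write $U^{-1}$ by the block-inverse formula. Its entries are: top-left $\alpha^{-2}+\alpha^{-2}\beta^T\Xi^{-1}\beta$, off-diagonal $-\alpha^{-1}\Xi^{-1}\beta$, and bottom-right $\Xi^{-1}$. With $Z_G$ partitioned into $Z_G^{(z)}$ (scalar), $Z_G^{(c)}$ and $Z_G^{(p)}$, I expand $\mathrm{Tr}(U^{-1}Z_G)$. This gives
\begin{equation*}
\mathrm{Tr}(U^{-1}Z_G)=\alpha^{-2}Z_G^{(z)}+\mathrm{Tr}\big(\Xi^{-1}\bar Z_G\big),
\end{equation*}
where $\bar Z_G$ collects $\alpha^{-2}Z_G^{(z)}\beta\beta^T+Z_G^{(p)}$ and the cross terms in $Z_G^{(c)}\beta^T$. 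I will symmetrize the cross term, since under the trace $\beta Z_G^{(c)T}$ and $Z_G^{(c)}\beta^T$ contribute equally; this reproduces the factor $2$ in the statement. The term $\exp(-Z_G^{(z)}/2\alpha^2)$ and the factor $\alpha^{-N}$ then separate out of the integral.

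What remains is $\int |\Xi|^{-N/2}\exp(-\tfrac12\mathrm{Tr}(\Xi^{-1}\bar Z_G))\,p(\Xi)\,d\Xi$ with $p(\Xi)=\mathcal{IW}_{l-1}(v,Y)$. The standard inverse-Wishart conjugacy applies here: the integrand is an unnormalized $\mathcal{IW}_{l-1}(v+N,\,Y+\bar Z_G)$ density. So the integral equals the ratio of normalizers,
\begin{equation*}
\frac{|Y|^{v/2}\,2^{(l-1)(v+N)/2}\,\Gamma_{l-1}(\tfrac{v+N}{2})}{2^{(l-1)v/2}\,\Gamma_{l-1}(\tfrac v2)\,|Y+\bar Z_G|^{(v+N)/2}}.
\end{equation*}
The powers of $2$ then combine with $2^{-Nl/2}$ to leave $2^{-N/2}$, matching \eqref{marg_final}.

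The main obstacle is the Schur-complement bookkeeping for $\mathrm{Tr}(U^{-1}Z_G)$. Getting the signs right in $\bar Z_G$ requires care, as does checking that $\bar Z_G$ is positive semidefinite. It should be, since it equals the Schur-type quadratic form $[-\alpha^{-1}\beta,\,I]\,Z_G\,[-\alpha^{-1}\beta,\,I]^T$. I would verify $\bar Z_G$ via this congruence form first, and then expand it to match the stated expression. A second point to check is that $U$ is nonsingular, i.e.\ $\alpha\neq0$. This follows from Assumption~\ref{assume:sigma}, because the first coordinate in basis $G$ is where $\Sigma$ carries mass along the singular direction.
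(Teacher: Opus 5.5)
Your proposal is correct and follows the paper's proof step for step: the Schur complement giving $|U|=\alpha^2|\Xi|$, the block inverse of $U$ splitting $\mathrm{Tr}(Z_G U^{-1})$ into $\alpha^{-2}Z_G^{(z)}+\mathrm{Tr}(\bar Z_G\Xi^{-1})$, and recognizing the $\mathcal{IW}_{l-1}(N+v,\,Y+\bar Z_G)$ kernel and taking the ratio of normalizers (your power-of-two bookkeeping is right).

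One refinement concerns $\bar Z_G$. Your congruence form $[-\alpha^{-1}\beta,\ I]\,Z_G\,[-\alpha^{-1}\beta,\ I]^T = \alpha^{-2}Z_G^{(z)}\beta\beta^T+Z_G^{(p)}-\alpha^{-1}\big(Z_G^{(c)}\beta^T+\beta (Z_G^{(c)})^T\big)$ is the correct symmetric PSD scale matrix inside $|\bar Z_G+Y|$. Collapsing the two cross terms into $-2\alpha^{-1}Z_G^{(c)}\beta^T$, as in the statement and the paper, is valid only inside the trace against the symmetric $\Xi^{-1}$. Outside the trace it generally changes the determinant, so read the statement's $\bar Z_G$ as this symmetrized matrix rather than trying to match it literally.
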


\begin{proof}
The marginalization over $Z$ is an integral over the domain of the likelihood:
\begin{equation}
    p(Z) = \int_{\mathcal{S}^+} p(Z_G | \Xi) p(\Xi) d\Xi
\end{equation}
where $\mathcal{S}^+$ is the set of PSD matrices. Next, we substitute in the prior and likelihood expressions.
\begin{multline}\label{original_marg_integral}
    p(Z) = \int_{\mathcal{S^+}} 
        \frac{|Z_G|^{(N - l - 1)/2} |Y|^{v/2}}
            {2^{(Nl + vl - v)/2} |U|^{N/2} \Gamma_l (\frac{N}{2}) \Gamma_{l - 1}(\frac{v}{2}) |\Xi|^{(v + l)/2} } \\
    \cdot \exp (-\frac{1}{2}( \mathrm{Tr}(Z_G U^{-1}) + \mathrm{Tr}(Y \Xi^{-1})) d\Xi
\end{multline}

Now, since the likelihood is defined by $U$ while the prior is over $\Xi$, we do not automatically have conjugacy as demonstrated in \cite{murphy2007conjugate}. The problematic terms are $\mathrm{Tr}(Z_G U^{-1})$ and $|U|$. These terms can be shown to be proportional to $\mathrm{Tr}(Z_G \Xi^{-1})$ and $\Xi$, so conjugacy may be regained. We will first demonstrate this for $|U|$ by applying a Schur complement along the existing partition:
\begin{equation}\label{u_det}
    |U| = \alpha^2 |\beta \beta^T + \Xi - \alpha^{-2} \alpha \beta \alpha \beta^T | 
    = \alpha^2 |\Xi|
\end{equation}
Next, simplifying $\mathrm{Tr}(Z_G U^{-1})$ requires a block inversion along the existing partition. After minor simplification, we find:
\begin{equation}
    U^{-1} = \begin{bmatrix} 
        \alpha^{-2} + \alpha^{-2} \beta^T \Xi^{-1} \beta & -\alpha^{-1} \beta^T \Xi^{-1} \\
        -\alpha^{-1} \Xi^{-1} \beta & \Xi^{-1}
    \end{bmatrix}
\end{equation}
$\mathrm{Tr}(Z_G U^{-1})$ may alternatively be considered a sum over all indices of the Hadamard product $Z_G \odot U^{-1}$. Additionally, $\beta^T \Xi^{-1} \beta = \mathrm{Tr}(\beta\beta^T \Xi^{-1})$, so the trace becomes
\begin{multline}
    \mathrm{Tr}(Z_G U^{-1}) = \\
    \alpha^{-2} Z_G^{(z)} + \mathrm{Tr}(\alpha^{-2} Z_G^{(z)}\beta \beta^T \Xi^{-1}) + \mathrm{Tr}(Z_G^{(p)}\Xi^{-1}) -  \\\alpha^{-1} \mathbf{1}^T (Z_G^{(c)} \odot \beta^T \Xi^{-1})\mathbf{1}
        - \alpha^{-1} \mathbf{1}^T (Z_G^{(c)} \odot \Xi^{-1} \beta)\mathbf{1}
\end{multline}
where $Z_G$ has been partitioned along the same axes as $U^{-1}$:
\begin{equation}
    Z_G = \begin{bmatrix}
        Z_G^{(z)} & Z_G^{(c)} \\ Z_G^{(c)} & Z_G^{(p)}
    \end{bmatrix}
\end{equation}
Finally, a sum over elements of the Hadamard product of two vectors is equivalent to their dot product. Therefore, $\mathbf{1}^T (Z_G^{(c)} \odot \beta^T \Xi^{-1})\mathbf{1} = {\beta}^T \Xi^{-1} Z_G^{(c)} = \mathrm{Tr}(Z_G^{(c)} \beta^T \Xi^{-1})$. Since $\Xi^{-1}$ is symmetric, $\mathbf{1}^T (Z_G^{(c)} \odot \Xi^{-1}\beta)\mathbf{1} = \mathrm{Tr}(Z_G^{(c)} \beta^T \Xi^{-1})$ as well. Applying these simplifications as well as the linearity of traces, we get
\begin{equation}\label{z_bar_def}
\begin{split}
    \mathrm{Tr}(Z_G U^{-1}) = \alpha^{-2} Z_G^{(z)} + \mathrm{Tr}(\bar{Z}_G \Xi^{-1})
\end{split}
\end{equation}
where $\bar{Z}_G$ is as defined in \eqref{marg_final}.

Returning to the marginalization, substitute \eqref{u_det} and \eqref{z_bar_def} into \eqref{original_marg_integral} and perform basic simplifications.
\begin{multline}\label{subbed_marg_integral}
    p(Z) = \int_{\mathcal{S^+}} 
        \frac{|Z_G|^{(N - l - 1)/2} |Y|^{v/2}}
            {2^{(Nl + vl - v)/2} \Gamma_l (\frac{N}{2}) \Gamma_{l - 1}(\frac{v}{2}) \alpha^N  |\Xi|^{(N + v + l)/2}} \\
            \cdot \exp \bigg(-\frac{Z_G^{(z)}}{2\alpha^2} \bigg) \cdot \exp \bigg( -\frac{1}{2}\mathrm{Tr}((Y + \bar{Z}_G) \Xi^{-1} \bigg) d\Xi
\end{multline}
Now, recognize the Inverse-Wishart kernel of $\mathcal{IW}_{(l-1) \times (l-1)}(N + v, Y + \bar{Z}_G)$, which must integrate to one. Therefore, marginal likelihood must be the ratio between the existing terms constant \textit{wrt} $\Xi$ and the normalization terms of the updated Inverse-Wishart kernel. This ratio simplifies to \eqref{marg_final}.
\end{proof}

Applying this marginal likelihood expression to importance sampling, the likelihood ratio actually simplifies if we use the Wishart likelihood for the variable probability measure parameterized by $D$.
\begin{multline}\label{wishart_d_likelihood}
    p(Z | D) = \frac
        {|\Sigma^{-1} - D|^{N/2}|Z|^{(N - l - 1)/2}}
        {2^{Nl/2}  \Gamma_l (\frac{N}{2})} \\
    \cdot \exp \Big( \frac{1}{2} \mathrm{Tr}(Z(D-\Sigma^{-1})) \Big)
\end{multline}

\begin{prop}\label{prop:wishart_cost}
    Under the constrained NIW sampling scheme with marginal likelihood given by Lemma \ref{lem:constrained_marginal}, the desirability function is
    \begin{multline}\label{wishart_feynman_kac}
        \Psi = 
        \frac
            {\alpha^N \Gamma_{l-1}(\frac{v}{2}) \big) }
            { 2^{N(l - 1)/2} \Gamma_{l-1}(\frac{N + v}{2}) |Y|^{v/2} } |\Sigma^{-1} - D|^{\frac{N}{2}} \sum_{k = 1}^K \\ 
         \bigg[
            |\bar{Z}_{G_k} + Y|^{\frac{N + v}{2}} 
            \exp \bigg( \frac{1}{2} 
                \mathrm{Tr}(Z_k(D - \Sigma^{-1})) -\frac{S_k}{\lambda^*} + \frac{{Z_{G_k}^{(z)}}}{2\alpha^2} 
            \bigg)
        \bigg]
    \end{multline}
\end{prop}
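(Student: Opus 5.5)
The plan is to run the same change-of-measure argument as in Lemma~\ref{lem:covariance_scaling}, but to change the reference measure. In Lemma~\ref{lem:covariance_scaling} the samples come from $P$, whose covariance is $\Sigma$. Here they come from the constrained NIW scheme. We therefore write the desirability under the task measure $Q_D$, with covariance $(\Sigma^{-1}-D)^{-1}$, as an expectation under the sampling distribution.

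Conditioned on the path, $\exp(-S/\lambda^*)$ depends on the trajectory only through the path itself. The likelihood ratio between the task measure and the sampling measure depends on the path only through the sufficient statistic $Z$, exactly as it did in \eqref{likelihood_ratio_final}. So we write
\begin{equation*}
\Psi=\mathbb{E}^{Q_D}\Big[e^{-S/\lambda^*}\Big]=\mathbb{E}^{\mathrm{NIW}}\Big[\frac{p(Z\mid D)}{p(Z)}\,e^{-S/\lambda^*}\Big].
\end{equation*}
Here $p(Z\mid D)$ is the Wishart likelihood \eqref{wishart_d_likelihood}, and $p(Z)$ is the constrained marginal from Lemma~\ref{lem:constrained_marginal}. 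Replacing the expectation by the empirical sum over the $K$ NIW-drawn paths, as in the proof of Proposition~\ref{prop:opt_problem}, gives a sum over $k$ of $\frac{p(Z_k\mid D)}{p(Z_k)}e^{-S_k/\lambda^*}$. We absorb the $1/K$ factor into the constant, or drop it, since it does not affect the maximizer.

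The rest is algebra on the ratio $p(Z_k\mid D)/p(Z_k)$, done in a fixed order.
\begin{itemize}
\item \emph{Cancel the $Z$-power.} The factor $|Z_k|^{(N-l-1)/2}$ appears in both numerator and denominator and cancels. This needs $|Z|=|Z_G|$, which holds because $G$ is orthogonal.
\item \emph{Cancel $\Gamma_l(N/2)$.} It also appears in both and cancels.
\item \emph{Combine the powers of $2$.} The factor $2^{-Nl/2}$ from the Wishart likelihood and $2^{N/2}$ from the marginal combine to $2^{-N(l-1)/2}$.
\item \emph{Invert the remaining marginal factors.} The rest of $p(Z_k)$ moves to the numerator: $\alpha^N$, $\Gamma_{l-1}(v/2)$, $|\bar Z_{G_k}+Y|^{(N+v)/2}$, and $\exp(Z_{G_k}^{(z)}/2\alpha^2)$. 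The factors $\Gamma_{l-1}(\tfrac{N+v}{2})$ and $|Y|^{v/2}$ go to the denominator.
\item \emph{Keep the $D$-dependent factors.} The factor $|\Sigma^{-1}-D|^{N/2}$ and the exponential $\exp\big(\tfrac12\mathrm{Tr}(Z_k(D-\Sigma^{-1}))\big)$ stay from $p(Z_k\mid D)$.
\end{itemize}
Pulling the path-independent factors out of the sum gives \eqref{wishart_feynman_kac}.

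The main obstacle is justifying that the importance weight may be computed at the level of $Z$ rather than the full path. This requires that the path density under both measures factor as (a $Z$-dependent Gaussian part) times (a common part). Under the NIW scheme this is delicate, because the path density after marginalizing over the random covariance $\Xi$ is still a function of the path only through $Z$, since the Gaussian likelihood depends only on $Z$. The $|Z|^{(N-l-1)/2}$ Jacobian factors are common to both densities and cancel. A second, lesser issue is absolute continuity. The NIW covariance $U$ is nonsingular because $|U|=\alpha^2|\Xi|>0$, as shown in \eqref{u_det}, so $Q_D\ll$ NIW on the feasible region \eqref{D_bounds}, where $\Sigma^{-1}-D>0$. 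I would state both points briefly and then do the bookkeeping above.
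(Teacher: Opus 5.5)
Your proposal is correct and follows the same route as the paper, whose proof substitutes the ratio of \eqref{wishart_d_likelihood} to \eqref{marg_final} into the change-of-measure identity \eqref{rebased_feyman_kac}, cancels common factors, and passes to a $K$-sample empirical sum. Your bookkeeping checks out: the $|Z|^{(N-l-1)/2}$ and $\Gamma_l(N/2)$ cancellations, the combined factor $2^{-N(l-1)/2}$, and the dropped $1/K$. Your remarks on the sufficiency of $Z$ and on absolute continuity make explicit justifications that the paper leaves implicit.
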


\begin{proof}
    Substituting the ratio between \eqref{wishart_d_likelihood} and \eqref{marg_final} into \eqref{rebased_feyman_kac}, canceling terms, and converting to an empirical average over $K$ samples yields \eqref{wishart_feynman_kac}.
\end{proof}

Although \eqref{wishart_feynman_kac} is quite complicated, all additional terms except $|\bar{Z}_{G_k} + Y|$ and $Z_{G_k}^{(z)}$ are sample-invariant. All $D$-dependent terms are identical in form to \eqref{D_opt_final}, so the reduced-dimension formulation of Section \ref{subsec:numerical_considerations} applies directly.

\begin{remark}\label{recursive_wishart}
    The methods described in this section may be extended to integrate over an Inverse-Wishart prior of lower rank as well by recursively applying \eqref{u_det} and \eqref{z_bar_def} with successive bases and rank one perturbations.
\end{remark}
\subsection{Simple Algorithm}\label{subsec:algorithm}
\noindent In this section, we present a very simple, online algorithmic implementation of the BOTC framework. The most obvious approach is to  insert the task choice step between trajectory simulations and control generation, as defined in the original \cite{kappen2005path} formulation. 

\begin{algorithm}
    \caption{Simple Online BOTC-PI}
    \KwData{
        $R$, $\Sigma$, $\phi$, $q$: problem parameters}

    \While{task not complete}{
        Collect $K$ trajectory samples \;
        Solve optimization problem for $D$ from either \eqref{D_opt_final}, \eqref{D_opt_reduced}, or \eqref{wishart_feynman_kac} \;
        $u^* \leftarrow \sum_{i=0}^{K} g_x(D) \exp \Big( -\frac{S_k}{\lambda^*} \Big) \frac{d\xi_i}{dt}$ \;
        Send $u^*(0)$ to actuators \;
    }
\end{algorithm}

\section{Simulation Results}\label{sec:simulations}

\begin{figure}
    \centering
    \includegraphics[width=1.0\linewidth]{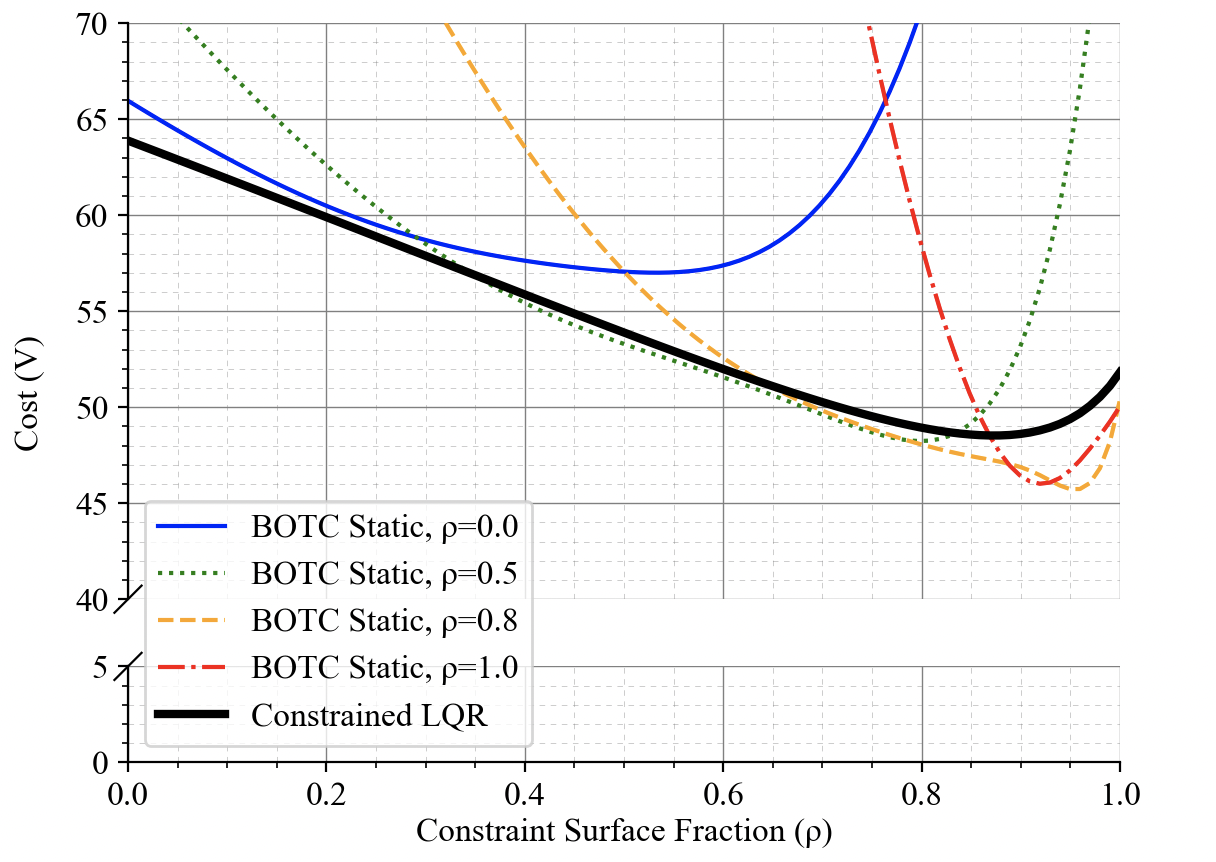}
    \caption{Static sampling: estimated (colored) and true (black) expected task cost as a function of constraint surface fraction $\rho$, with $K=1000$ and $\Delta t = 0.05$.}
    \label{fig:static}
    \includegraphics[width=1\linewidth]{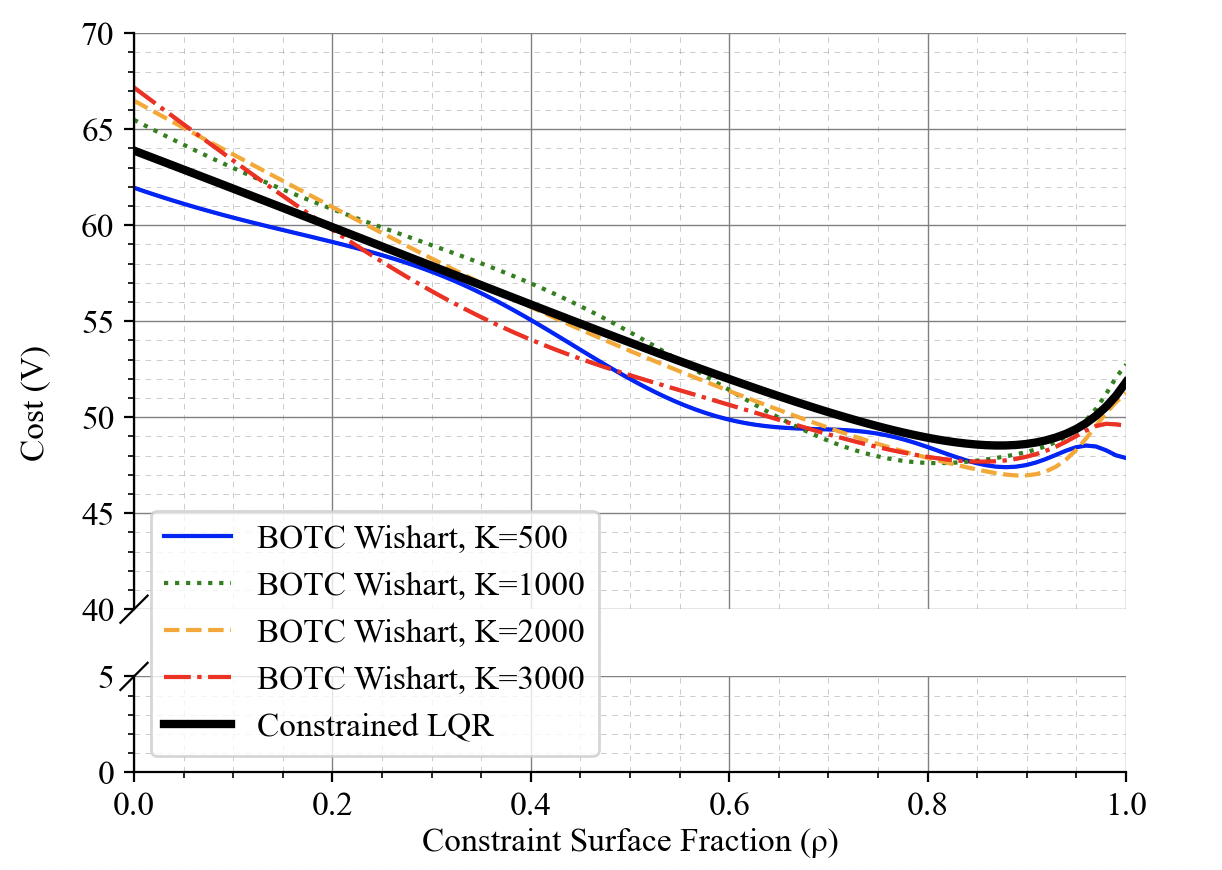}
    \caption{Wishart sampling: estimated expected task cost across the full constraint surface, with Inverse-Wishart mean at $\rho=0.5$ and $v=10$.}
    \label{fig:wishart}
\end{figure}

\noindent In this section, we present the results of simulated experiments on a simple stochastic, two-dimensional, finite-horizon LQR system. We aim to demonstrate that the BOTC framework outlined in Section \ref{sec:task_choice} accurately estimates the expected \textit{task cost}---the cost under the modified control cost and covariance---across the entire constraint surface. We chose LQR for this experiment because the expected cost may be calculated analytically for any given initial position and provides a ground truth for expected task cost across the entire constraint surface. 

The system used for this experiment is defined by:
\begin{equation*}
\begin{split}
    A = \begin{bmatrix}
        0.1 & 0.1 \\ 0 & 0.1
    \end{bmatrix},
    \qquad
    B = \begin{bmatrix}
        1 & 0 \\ 1 & 1
    \end{bmatrix} \\
    \Sigma, R = \begin{bmatrix}
        1 & 0 \\ 0 & 2
    \end{bmatrix},
    \qquad
    Q = \begin{bmatrix}
        1 & 0 \\ 0 & 1
    \end{bmatrix}
\end{split}
\end{equation*}
where $A$ is uncontrolled dynamics, $B$ is the control effect matrix, $Q$ is both running and terminal state costs, and $\Sigma$ and $R$ are as previously defined. Note that this system does not conform to the PI constraint \eqref{pi_constraint}. The infinite-horizon LQR optimal feedback gain is used as the importance sampling mean for all distributions. Initial state for all tests is $x(0) = [3 \quad4]^T$. Although the shapes of the resulting curves are entirely state-dependent, performance is analogous for all initial states. The initial state was chosen to give an interior minimizer for demonstration purposes.

Because the problem has two control dimensions, the optimization problem is one-dimensional as described in Section \ref{subsec:numerical_considerations}. For static sampling experiments, sampling is performed from various distributions corresponding to points along the $\mathbf{0} \leq \widetilde{D} \leq \widetilde{M}$ optimization domain. For readability, sampling distributions are parameterized by fraction along constraint surface, $\rho \in [0, 1]$. All static sampling distributions were calculated with $K=1000$ samples and $\Delta t = 0.05$. Wishart sampling was performed with an inverse-Wishart mean midway along the constraint surface and $v=10$.

As demonstrated in Figure \ref{fig:static}, the static sampling method provides an estimate closely tangent to the true cost surface at the point corresponding to its sampling distribution. As such, in the neighborhood of its sampling distribution, it provides a highly accurate estimate of both the expected cost and its gradient \textit{wrt} the decision variable. 

Figure \ref{fig:wishart} illustrates both the advantages and drawbacks of Wishart sampling. Wishart sampling provides a reasonable estimate of the expected cost across the entire optimization domain. However, its estimate of the curve is occasionally erratic, and therefore its derivative is not as reliable. As demonstrated by the improvement from the $K=500$ curve to the higher sample count curves, Wishart cost surface estimates become significantly more accurate with greater samples.

The contrast between Figures \ref{fig:static} and \ref{fig:wishart} suggests that static sampling is preferable for local optimization, while Wishart sampling is required for global optimization.

\section{Conclusion}

\noindent In this paper, we have presented Bound-Optimized Task Choice (BOTC), a framework for stochastic control extending path integral methods to problems not satisfying the coupling constraint on control cost and covariance. This framework provides methods for finding the globally optimal PI-compliant approximate problem, thereby minimizing the upper bound on expected cost.

To achieve this, we have derived the theoretical basis for mapping an optimal control problem to possible approximations on that constraint's surface. We found the conditions under which an approximated task gives an upper bound on the original problem's expected cost. Then, we reduced that space of valid tasks to the subset of potentially optimal tasks. Next, we adapted existing methods to the purpose of calculating expected cost under an arbitrary probability measure without resampling. Optimizing this expected cost over the space of viable tasks allows us to find the globally optimal task, providing the lowest upper bound on the original problem's expected cost attainable with PI control. We present a novel importance sampling scheme based on the Normal-Inverse-Wishart distribution to improve global task space optimization. Finally, we validate these methods on a simple LQR system, demonstrating both the potential and limitations of this framework.

Further work will apply BOTC methods to concrete non-linear control problems, including both offline problem analysis and online task optimization. Future iterations of the basic algorithm presented in Section \ref{subsec:algorithm} will apply iterative methods for task optimization. The local accuracy of static sampling lends itself to gradient methods, while the global accuracy Wishart sampling will enable such an algorithm to escape local minima.


While this paper uses the most basic PI formulation as proposed by \cite{kappen2005path}, future work should attempt to integrate BOTC methods with other PI frameworks such as MPPI \cite{williams2017model} and the information theoretic formulation in \cite{theodorou2012relative}.

\bibliographystyle{IEEEtran}
\bibliography{bibliography}

\end{document}